  \newcommand{\coqdef}[3]{%
    \phantomsection%
    \hypertarget{coq:#1}{#3}%
  }
\newtheorem{defi}{Definition}
\newtheorem{prop}{Proposition}
\newtheorem{ax}{Axiom}
\title{A Coq-based Axiomatization of Tarski's Mereogeometry}
\author[1]{Patrick Barlatier}
\author[1]{Richard Dapoigny}
\affil[1]{University of Savoie Mont-Blanc, LISTIC}
\affil[ ]{\texttt{patrick.barlatier@univ-smb.fr}}
\affil[ ]{\texttt{richard.dapoigny@univ-smb.fr}}
\date{}
\begin{document}
\maketitle

\begin{abstract}
During the last decade, the domain of Qualitative Spatial Reasoning, has known a renewal of interest for mereogeometry, a theory that has been initiated by Tarski. Mereogeometry relies on mereology, the Le{\'s}niewski's theory of parts and wholes that is further extended with geometrical primitives and appropriate definitions. However, most approaches (i) depart from the original Le{\'s}niewski's mereology which does not assume usual sets as a basis, (ii) restrict the logical power of mereology to a mere theory of part-whole relations and (iii) require the introduction of a connection relation. Moreover, the seminal paper of Tarki shows up unclear foundations and we argue that mereogeometry as it is introduced by Tarski, can be more suited to extend the whole theory of Le{\'s}niewski. For that purpose, we investigate a type-theoretical representation of space more closely related with the original ideas of Le{\'s}niewski and expressed with the Coq language. We show that (i) it can be given a more clear foundation, (ii) it can be based on three axioms instead of four and (iii) it can serve as a basis for spatial reasoning with full compliance with Le{\'s}niewski's systems.
\end{abstract}
\begin{center}
\begin{minipage}{0.85\textwidth}
\footnotesize
\textbf{Note.} This is the author's accepted manuscript of the paper:
P. Barlatier and R. Dapoigny “A Coq-Based Axiomatization of Tarski’s Mereogeometry”
COSIT 2015, in Lecture Notes in Computer Science.
The final authenticated version is available at SpringerLink.		
\end{minipage}
\end{center}
\section{Introduction}
In Knowledge Representation and Reasoning (KRR), and especially in the sub-domain of Qualitative Spatial Reasoning (QSR) the very nature of topology and its relation to how humans perceive space stems from what is called mereotopology \cite{Rand92,Asher95,Varzi96,Hahmann10}. The term mereotopology refers (i) to mereology, i.e., the theory of parts and wholes (see e.g., \cite{Simon87} for a complete analysis) and (ii) to the addition of topological relation(s) to mereology (e.g., the \textgravedbl Connected\textacutedbl relation) in order to get sufficient expressiveness for reasoning about space. Notice that mereotopology is general enough since it can be applied in many other fields where the spatial character is not the primary purpose. An alternative approach for the modeling of spatial regions which relies on geometrical primitives is known as mereogeometry \cite{Tarski56a}. It has known some gain of interest during the last decade \cite{Benn01,Grusz08,Borgo10,Betti12} and appears as a promising research area. Mereo-geometrical relations are required to be invariant to the strength of the type of geometry, e.g., for affine geometry they should be invariant under affine transformations. Mereogeometry has been applied mainly in the area of control tasks for mobile robotics \cite{Kuipers87,Korten98} and in the semantics of spatial prepositions \cite{Aurnag95}.
\par The development of such region-based theories can be seen as an appealing alternative to set theory, point-set topology and Euclidian geometry. The quest for such theories of space are primarily motivated by human cognition, i.e., how humans perceive their spatial environment. It is also well-known that these region-based theories are able to draw topological or mereological conclusions even in the absence of precise data. Furthermore, they bridge the gap between low-level and high-level representations of space by providing a way to understand the nature of points, e.g., offering a structure that is not evident in Euclidian geometry \cite{Eschenbach95}. 
\par Region-based theories using topological properties have an expressive power which is much more restricted than point-based geometry. 
From that perspective, mereogeometry and more especially, Tarski's mereogeometry, appears as a powerful alternative. However, it has not be fully formalized, and several authors have provided a fully formal system based on Tarski'work, either using a first-order language \cite{Borgo96} or set theory \cite{Grusz08}, or using parthood together with a sphere predicate \cite{Benn01}. As far as we know, the totality of these approaches consider mereology as the theory of parts and wholes and restrict the mereological contribution to the theory, as a single \textit{part-of} relation. This assumption leads to many difficulties such as the addition of the so-called Weak/Strong Supplementation principle whose assumption is highly debatable \cite{Simon87,Varzi96}. Furthermore, mereogeometry is not compelled to provide a connection primitive \cite{Tarski56a,Benn01,Grusz08}. 
\par We observe that all the mereogeometrical theories developed so far (i) depart from the classic mereology of Le{\'s}niewski which does not assume usual sets as a basis, (ii) perceive mereology as a mere theory of part-whole relations\footnote{Most descriptions suffer from the misunderstanding that mereology is formally nothing more than a particular elementary theory of partial ordering.} and (iii) restrict the underlying logic to first-order logic whereas the original theory of Le{\'s}niewski is clearly higher-order. We argue that mereogeometry as it is introduced by Tarski, appears more suited to extend the foundations of Le{\'s}niewski's mereology for building a sound theory of space.
\par The first objective of the paper is to motivate for a mereogeometry based on Le{\'s}niewski's work with the purpose of providing a well-founded region-based theory of space \cite{Hahmann10}. The second objective is to develop a type-theoretical theory of space more closely related with the original ideas of Le{\'s}niewski \cite{Lesniewski16}. Finally, besides expressing Le{\'s}niewski's mereology with the Coq language \cite{Bertot04}, we are able to provide a version of Tarski's mereogeometry in Coq. After a short presentation of motivations in section 2, we describe the basis of Le{\'s}niewski's mereology in section 3. The type-theoretical account of  Le{\'s}niewski's mereology is summarized in section 4. Section 5 discusses the foundations of mereogeometry which (i) extends the type-theoretical mereology and (ii) builds upon Tarski's work. 
\section{Motivations for a Le{\'s}niewski's approach to mereogeometry }
We first present some arguments which advocate for using mereogeometry as an appealing alternative to mereotopology. Besides, we show how a detailed analysis of Tarski's papers will motivate the Le{\'s}niewskian approach as a sound basis for mereogeometry.
\par In mereotopology as well as in mereogeometry, the common theoretical core is the so-called mereology basis. However, in the quasi-totality of approaches the term mereology has little connection (if any) with the Le{\'s}niewski's Mereology. The notable exception is the mereogeometry of Tarski \cite{Tarski29,Tarski56a}. The strong point for using Le{\'s}niewski's mereology stands in the fact that it can be built from algebraic structures, and more precisely from a quasi boolean algebra (i.e., without the zero element). It was stated by Tarski \cite{Tarski56b} and further proved by Clay \cite{Clay74} that every mereological structure can be transformed into complete Boolean lattice by adding zero element (its non-existence is a consequence of axioms for mereology) and conversely, that every complete Boolean lattice can be turned into a mereological structure by deleting the zero element. Furthermore, a major difficulty we see in mereotopological theories stems from the addition of a purely topological primitive (i.e., connection) to a mereological framework. This commitment results in hybrid systems whose theoretical foundations are unclear. As underlined in \cite{Benn01}, mereotopological theories are limited to describe topological properties and therefore, their expressive power is more restricted than point-based geometry. Another benefits of using mereogeometry is its ability to reconstruct points from region-based primitives \cite{Gerla95}. For instance, in Tarski's mereogeometry, the domain of discourse is classified into spheres and thus one can recover points from spheres i.e., metrical ones. For these reasons, we argue for a mereogeometrical framework using the Le{\'s}niewski's mereological deductive basis. The best candidate for such a framework is the geometry of solids first proposed in \cite{Tarski29} and modified in \cite{Tarski56a}.
\par However, a detailed analysis of Tarski's papers shows up some unclear methodological aspects \cite{Betti12}. The authors point out that Tarski's theory suffers from (i) some methodological divergences from Le{\'s}niewski's mereology and (ii) a free mixing of Le{\'s}niewski's mereology and the type-theoretical approach of Whitehead and Russel's \textit{Principia mathematica}. Point (i) refers to the facts that in Le{\'s}niewski's theories, axioms precede definitions and that axioms should not contain defined notions \cite{Betti13}. Whereas in the early paper \cite{Tarski29} Tarski assumes the availability of Le{\'s}niewski's mereology, in the second paper \cite{Tarski56a} he provides a minimal mereology based on three definitions and two axioms. As advocated in \cite{Rickey77}, the axiomatization of mereology given by Tarski is deficient in that one of the Le{\'s}niewski's axioms of mereology is not provable from his axioms. By adopting the original version of Le{\'s}niewski as axiom system, we are able to prove that this assertion is irrelevant here. However, while the proposed version fails to come up to the aesthetic rules of \textgravedbl well-constructed axiom system\textacutedbl, it has the virtue of intuitive simplicity and it expresses faithfully Le{\'s}niewski's mereological principles. Point (ii) requires the clarification of two aspects: (a) the exact semantics behind the notions of \textgravedbl class\textacutedbl and \textgravedbl sum\textacutedbl and (b) the relation between the domain of discourse and the range of quantifiers. To make (a) precise, a footnote in the second paper of Tarski \cite{Tarski56a} claims that the sense in which class should be interpreted is Russellian. While the approach of Bennett \cite{Benn01} develops a region-based version of Tarski's theory by restricting the mereology to first-order axioms without involving set-theory, it results in an undecidable model. Alternatively, the analysis detailed in \cite{Grusz08} relies on an algebraic framework which supports higher-order logic but uses set theory which is incoherent if one expects to work with Le{\'s}niewski's mereology. Departing from these approaches, the contributions of the paper turn out (i) to formalize Tarski's geometry of solids w.r.t. Le{\'s}niewski's mereology in a strict methodological sense and (ii) to provide a type-theoretical account for the whole theory based on the Calculus of Inductive Constructions.
\section{Le{\'s}niewski's Mereology}
\subsection{Foundations}
In the early 20th century, S. Le{\'s}niewski first proposed a higher-order logical theory (called protothetic) based on a single axiom, the equivalence axiom. This logical system was the support of a theory called mereology, whose purpose was the description of the world with collective classes and mereological relations such as the so-called \textit{part-of} relation. While set theory relies on the opposition between element and set, mereology is rather based on the opposition beteen part and whole. Mereology primarily assumes the distinction between the distributive and collective interpretations of a class. A distributive class (a usual set) is the extension of a concept such as for example, the solar system, $P = \lbrace Mercury, Venus, Earth, Jupiter, Mars, Saturn, Neptune, Pluto \rbrace$ in which $P$ is a distributive class which contains nine elements and nothing else. By contrast, the notion of collective class involves both the previous planets and a lot of entities such as the moon, the oceans, water, the fishes in it, the red Jupiter spot, the rings of Saturn and the like. It should also be noted that Le{\'s}niewski's mereology assumes the interplay between distributive and collective classes. The ontology introduces names, where name is the distributive notion, while the mereology relies on classes, where class is the collective notion. The theory also admits two kinds of variables which are either of the type propositional and belong to the semantic category $S$ or to the category of names $N$. Any name may refer to a singular name, a plural name or the empty name. Notice that the introduction of plural names gives mereology an expressive power that goes beyond the capabilities of first-order logic (e.g., predicates may be introduced which have plural subjects). The logical system relies on the equivalence relation which allows to introduce new symbols as names of arbitrary elements of any proposition. Le{\'s}niewski's idea was to construct entirely new foundation for mathematics. For that purpose, its theory consists of three subsystems:
\begin{itemize}
\item prototethics which corresponds roughly to a higher-order propositional calculus introduces the equivalence, the category $S$ and extensionality,
\item ontology referred to as Le{\'s}niewski's Ontology (LO) which includes protothetic, could be described as a theory of particulars. It introduces the copula \textgravedbl $\varepsilon$\textacutedbl (distinct from the set-theoretic $\in$) and the category $N$,
\item mereology (which includes LO) whose axioms are not deducible from logical principles is rather a theory which introduces collective classes, name-forming functor \textgravedbl class of\textacutedbl, \textgravedbl part of\textacutedbl and \textgravedbl element of\textacutedbl relations together with their properties.
\end{itemize}
In the following upper case letters will denote singular names while lowercase will refer to plural names. Le{\'s}niewski's mereology can be presented in many settings depending on the adopted set of primitive terms. We refer here to the original system (1916). Notice that different versions of mereology have been proposed, but it has been proved that each of these versions are inferentially equivalent to the original system.
\par \textbf{Protothetic} \par Protothetic is a logically rigorous system, a propositional calculus with quantifiers and semantic categories developed for supporting LO, which itself supports mereology. The construction of computable systems of protothetic (referred as $S_1$ to $S_5$) allowed Le{\'s}niewski to prove that $S_5$ is a consistent and complete system \cite{Lesniewski38,Slupecki53}. In protothetic, a single category $S$ is used as the universe of propositions. It is a higher-order two-valued logic based on the equivalence relation denoted $\equiv$ between terms which can be propositions or propositional expressions (e.g., a propositional function having two propositional arguments such as $S/SS$\footnote{The two argument categories are on the right side and the resulting category on the left side.}). Without getting into the details, the formalization of protothetic obeys a succession of refinements that could broadly divided in two steps: (i) a quantified equivalential calculus and (ii) a logic of bivalent propositions which should support quantification on any variable whatever its category, and should admit the law of extensionality for propositions. The extensionality principle for propositional types is written as:
\begin{eqnarray*}
\forall p \: q \: f,  (p \equiv q) \: \equiv \: (f(p) \equiv f(q))
\end{eqnarray*}
where $p$, $q$ and $f$ denote propositional variables.
\par \textbf{Le{\'s}niewski's Ontology} \par LO can be perceived as an extension of the traditional Aristotelian formal logic and interpreted as a theory a certain kind of general principles of existence. Given two names, $a$ and $b$, the sentence $a \varepsilon b$ is true iff $a$ denotes exactly one object, and this object is named by $b$. The distributive meaning of classes is captured by $\varepsilon$ since \textgravedbl A is an element of the extension of the objects a\textacutedbl, is identical to \textgravedbl A is a\textacutedbl, i.e., $ A \varepsilon a$. Intuitively it means that one object (individual) falls under the scope of a collective notion (a sum of individuals). A single axiom states the properties of the copula $\varepsilon$:
\begin{eqnarray*}
\forall A a, \: A \varepsilon a & \equiv & ((\exists B, B \varepsilon A) \wedge  (\forall C D, (C \varepsilon A \wedge D \varepsilon A) \rightarrow C \varepsilon D) \wedge (\forall C, C \varepsilon A  \rightarrow C \varepsilon  a))
\end{eqnarray*}
The first conjunction of the right side of the equivalence prevents $A$ from being an empty name, the second conjunction states the uniqueness of A while the last conjunction refers to a kind of convergence (anything which is $A$ is also an $a$). The only functor of LO provides the meaning of $\varepsilon$ via a single axiom and rules of the system. Many definitions are given on the form of equivalences, that is using primitives of the language rather than meta-level primitives which do not belong to the language. We only describe definitions which are of interest for the purpose of the paper. The first one defines the $singular\_equality$ which holds when two singular names denote the same object.
\begin{eqnarray*}
\forall A \: B,  singular\_equality \: A \: B \equiv (A \: \varepsilon \: B \: \wedge \: B \: \varepsilon \: A)
\end{eqnarray*}
This relation is symmetric and transitive, but not reflexive. The second one is the $weak\_equality$ between plural names (symmetric, reflexive and transitive), i.e., an equivalence relation. 
\begin{eqnarray*}
\forall a \: b,  weak\_equality \: a \: b \equiv \forall A, (A \: \varepsilon \: a) \equiv (A \: \varepsilon \: b)
\end{eqnarray*}
The extensionality principle is involved to prove the following theorem of LO where $A$, $B$, $C$ are members of the category $N$, while $\phi$ belongs to $N/N$:
\begin{eqnarray*}
\forall A \: B \: C \: \phi, \; A \: \varepsilon \: (\phi \: B) \: \wedge singular\_equality \: B \: C \; \rightarrow \; A \: \varepsilon \: (\phi \: C)
\end{eqnarray*}
\par \textbf{Mereology} \par Mereology is an axiomatic theory of parts built on LO which relies on the concept of (collective) classes. For that purpose, it can be formalized in terms of different primitives. The most usual formalization introduces a mereological element called part (\textsf{pt}), as a primitive. Mereology is developed on a minimal collection of axioms and new primitive functors are defined whose the most important are that of (collective) class, denoted \textsf{Kl} and that of element of a class referred to as \textsf{el}. From now, it should be clear that the notions of \textgravedbl collective class\textacutedbl, \textgravedbl mereological sum\textacutedbl and \textsf{Kl} are equivalent. Besides, it has been demonstrated that Le{\'s}niewski's mereology is consistent \cite{Clay68,Lejewski69}. 
\subsection{Contribution of Le{\'s}niewski's Mereology}
Representing all the Le{\'s}niewski's apparatus in Coq will add some value to the current work. The first benefit comes from the underlying higher logic (protothetic) which is coherent with the choice of a higher-order type theory on which Coq relies. The higher-order framework of Coq offers much more expressiveness than first-order provers. For example, relations can be composed at the meta level from existing relations. The usual critique concerns the lack of automation inherent to higher-order provers, however, Coq comes with a proof-search mechanism and with several decision procedures that enable the system to automatically synthesize lumps of proof. 
\par The second benefit relates to the conceptualization. First, using Le{\'s}niewski's mereology simplifies Tarski's approach to mereogeometry and clarifies many semantic issues (see section \ref{Mereogeometry}). Using LO is a way to provide sound and simple ontological foundation for spatial reasoning. Second, the theory is able to aggregate in a single framework distributive and collective classes giving rise to an expressive and coherent theory. For example, instead of representing the relation of \textgravedbl proper-part of\textacutedbl as a relation between two names $x$ and $y$, Le{\'s}niewski's mereology provides a more expressive relation \cite{Simon87} using the two-place functor $\varepsilon$ and the one-place functor $pt$ with: $x \: \varepsilon \: pt \: y$ resulting in a stronger system. Not only we get more information but, we also avoid to eliminate the notion of distributive class \cite{Clay74} and consequently the interplay between the two notions of class (distributive and collective).
\par An alternative to mereogeometry is the so-called mereotopology \cite{Varzi96}. However, there is some foundational difficulty here. On the one hand, the original mereology of Le{\'s}niewski is rather unknown\footnote{This is partly due to the destruction of most of his work during the second world war and to the difficulty to assess protothetic.} while on the other hand, most works view his work as a mere theory of parts and mix it inside a set-based framework. In fact, Le{\'s}niewski's theory is incompatible with a set-based approach and it is one of the underlying objectives of our work to provide a coherent theory independent from set theory. Mereo-geometrical definitions (see section \ref{Mereogeometry}) rely on user's requests that can be solved using lemmas and tactics in a quasi-automatic mode. The library proved in \cite{Dapoi15} uses more than a hundred theorems that are all available for reasoning in mereogeometry. 
While mereology-based axiomatizations of Tarski’s mereogeometry was given previously, the paper has a number of new contributions: (i) it incorporates Lesniewski's mereology (with its protothetic and ontology components) together with Tarski's mereogeometry in a coherent system, (ii) it is fully formalized and (iii) it is computer verified.


\section{The Type-theoretical Mereology}\label{TypeMereo}
\subsection{Using Coq as a Logical Foundation}\label{Unif-frame}
Tarski has pointed out that propositional connectives can be defined in terms of logical equivalence and quantifiers in the context of higher-order logic \cite{Tarski23}. Then, Quine further showed how quantifiers and connectives can be expressed in terms of equality and the abstraction operator in the context of Church's type theory \cite{Quine56}. These results led Henkin to define the propositional connectives and quantifiers as they appear in protothetic, in type theory having at least function application, function abstraction, and equality \cite{Henkin63}. Versions of dependent type theory in computer science such as the Calculus of Inductive Constructions (CIC) are largely inspired from Church-style (constructive) logics. As a consequence, there is a semantic connection between the logical content of CIC and protothetic since the former can be seen as an extension of Church's type theory which itself is able to represent propositional connectives and quantifiers of protothetic. As underlined in \cite{Appel04,Bertot08}, there is a significant advantage for using a dependently typed functional language such as CIC and its implementation with the Coq language, mainly because its programming language combines powerful logical capabilities and reasonable expressive power. 
\par We have to prove that the system of axioms and rules of inference governing sentential connectives and quantifiers of protothetic is expressible in Coq. The qualitative measure suggested here is relative to the system $S5$ of protothetic and the CIC which is, with some variants, the one implemented in the theorem prover Coq. 
\begin{prop}
CIC is at least as expressive as protothetic.
\end{prop}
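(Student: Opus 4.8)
The plan is to read the claim as a relative-expressiveness (embedding) statement and to prove it by exhibiting a translation of $S_5$ protothetic into CIC together with a derivability-preservation argument. First I would define a map $(-)^*$ from the syntax of protothetic into Coq terms: the semantic category $S$ of propositions goes to the impredicative sort $\textsf{Prop}$; each functional category, e.g.\ $S/SS$, goes to the corresponding function type, e.g.\ $\textsf{Prop} \to \textsf{Prop} \to \textsf{Prop}$; propositional and functional variables go to Coq variables of the translated type; the primitive equivalence $\equiv$ on terms of category $S$ goes to a chosen notion of identity on $\textsf{Prop}$; and every protothetic quantifier goes to CIC's dependent product $\forall$. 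The impredicativity of Coq's $\textsf{Prop}$ is exactly what makes quantification over all propositional and higher categories sound, so $(-)^*$ is total and type-correct.

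Next I would certify that the embedding respects the intended meaning. Invoking the definability chain recalled above (Tarski, Quine, Henkin), the connectives $\neg,\wedge,\vee,\rightarrow$ of protothetic are introduced by definitions framed purely in terms of $\equiv$ and quantification; I would translate those definitions verbatim and then prove inside Coq that each translated connective is provably equivalent to Coq's native connective of the same name. This reduces the remaining work to the core apparatus of equivalence and quantifiers. I would then verify the axiomatic content: that the single equivalence axiom of $S_5$ and the extensionality principle $\forall p\, q\, f,\ (p \equiv q) \equiv (f(p) \equiv f(q))$ translate to CIC theorems, and that each rule of inference of protothetic --- detachment for $\equiv$, substitution, the quantifier rules, and the definition/extensionality rule by which protothetic is conservatively extended --- is derivable or admissible in Coq. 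The quantifier rules are covered by $\forall$-introduction/elimination together with conversion, and the definitional-extension rule by Coq's own definition mechanism. A routine induction on the length of a protothetic derivation then yields: if $\varphi$ is a theorem of $S_5$, then $\varphi^{*}$ is provable in CIC, which is precisely the \emph{at least as expressive} assertion; note that only the embedding direction is required, so I need not argue that CIC is no more expressive.

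The delicate point, and the one I expect to be the main obstacle, is the interpretation of $\equiv$ and the validation of the extensionality law. In bare CIC, propositional extensionality is not derivable, so $\equiv$ cannot simply be Leibniz equality on $\textsf{Prop}$ while validating the full extensionality content: if $\equiv$ is modelled by $\leftrightarrow$ then substitutivity fails for functions that do not respect $\leftrightarrow$, and if it is modelled by equality on $\textsf{Prop}$ then turning the required logical equivalences into equalities is not free. The clean route is to assume the propositional-extensionality axiom, which is known to be consistent with CIC; under it $\equiv$ becomes genuine equality on $\textsf{Prop}$, substitutivity is immediate, and the extensionality law reduces to Leibniz's law (with the converse direction obtained by instantiating the quantified function variable at the identity). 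Managing this choice cleanly, together with the universe and impredicativity bookkeeping needed to keep the higher-order quantifiers sound, is where the real work lies; once the $\equiv$-and-extensionality core is settled, the connective and quantifier clauses are essentially bookkeeping.
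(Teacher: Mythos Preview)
Your proposal is correct in spirit but takes a genuinely different route from the paper. The paper does not build a direct syntactic embedding; instead it argues indirectly through Church's Simple Type Theory (STT) as an intermediate system. Step one observes that CIC sits above STT in a known hierarchy (STT $\subseteq$ System~$F$ $\subseteq$ CIC, with universes and dependent types only adding strength), so CIC is at least as expressive as STT. Step two invokes Henkin's result that protothetic is equivalent in expressive potential to a theory of propositional types, and that LO (which contains protothetic) matches STT, so protothetic is at most as expressive as STT. Chaining the two gives the claim.

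What each approach buys: your direct-embedding argument is more constructive and makes the derivability-preservation content explicit, but it forces you to confront the extensionality problem head-on and, as you note, pushes you toward assuming propositional extensionality (or working up to setoid equivalence), so strictly speaking you end up comparing protothetic with CIC-plus-an-axiom rather than bare CIC. The paper's comparative argument operates at a coarser grain of ``expressiveness'' and leans on the cited literature, which lets it avoid the extensionality question inside the proposition itself; the extensionality bookkeeping is then handled separately via setoids when the actual Coq encoding is built. Your route is more honest about where the friction lies; the paper's route is shorter and delegates the hard analysis to Henkin and the System~$F$/CIC folklore.
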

\begin{proof} The first step is to prove that CIC is at least as expressive as Church Simple Type Theory (STT)\cite{Church40}. Whereas STT has variables ranging over functions, together with binders for them, System $F$\footnote{Known as Girard--Reynolds polymorphic lambda calculus. It extends STT by the introduction of a mechanism of universal quantification over types (second-order) and is itself extended in CIC.} supports a mechanism of universal quantification over types which results in variables ranging over types, and binders for them. It yields that generic data types such as list, trees, etc. can be encoded. It follows that system $F$ is more expressive than STT. CIC adds universes to the system $F$, which leads to an improvement of its consistency strength. Adding dependent types in CIC enhances the computational power but does not affect its consistency strength. As a result, the expressive power of CIC is higher than STT.  
\par In the second step we have to prove that Church Simple Type Theory is at least as expressive as protothetic. In protothetic quantification is allowed on propositional variables and variables of propositional functors to any degree. It follows that protothetic is equivalent in expressive potential to a theory of propositional types \cite{Henkin63,Simon98}. Now, if we consider LO which includes protothetic, all symbols of LO can be substituted with variables that can be quantified, then LO is equivalent in expressive power to STT. It yields that protothetic which is a sub-theory of LO is at most as expressive as STT. Combining the two claims, we derive that CIC is at least as expressive as protothetic. 
\end{proof}
\par It turns out that Le{\'s}niewski's mereology can be expressed in CIC and thus, encoded in Coq. CIC \cite{Coq88,Coq90} is an intensional type theory using a universe for logic ($Prop$) and a hierarchy of universes for types whose low-level universe is $Type$ reflecting merely the Le{\'s}niewski's categorization with the respective categories of LO, $S$ (propositions) and $N$ (names). The Calculus of constructions (and therefore Coq) also includes an inference rule for equality between types called conversion rule. However, it does not include primarily an extensional equivalence and requires a way of defining an extensional conversion within an intensional system \cite{Seldin01,Oury05}. For that purpose setoids provide an efficient way to introduce extensionality which maintains the difference between identity and equivalence i.e., with an interpretation of intensional equality (the equality on the original set) and extensional equality (the equivalence relation on the new set). Setoids have been early introduced into the Coq theorem prover \cite{Hof95,Bart03}, and refined using type classes \cite{Sozeau09}. The encoding of setoids for extensionality in the paper will refer to the latter. The basic idea is to express mereology in Coq with some constraints: (i) mapping the category $S$ onto $Prop$ and the category of names to the type $N$, (ii) using dedicated equivalence relations for expressing functors and (iii) expressing extensionality through setoids.
\subsection{A Short Introduction to the Coq Theorem Prover}\label{coq}
The Coq language is a tool for developing mathematical specifications and proofs. As a specification language, Coq is both a higher order logic (quantifiers may be applied on natural numbers, on functions of arbitrary types, on propositions, predicates, types, etc.) and a typed lambda-calculus enriched with an extension of primitive recursion (further details are given in \cite{Bertot04}). The underlying logic of the Coq system is an intuitionist logic. This means that the proposition $A \: \vee \neg A$ is not taken for granted and, if it is needed, the user has to assume it explicitly. This allows to clarify the distinction between classical and constructive proofs. 
\par Its building blocks are terms and the basic relation is the typing relation. Coq is designed such that it ensures decidability of type checking. All logical judgments are typing judgments. The type-checker checks the correctness of proofs, that is, it checks that a data structure complies to its specification. The language of the Coq theorem prover consists in a sequence of declarations and definitions. A declaration associates a name with a qualification. Qualifications can be either logical propositions which reside in the universe\footnote{Also called \textit{sort}.} $Prop$ or abstract types which belong to the universe $Type$ (the universe $Type$ is stratified but this aspect is not relevant here).
\par Conversion rules such as $\beta$-reduction allow for term reductions which in that particular case, will formalize the substitution rule of protothetic. Standard equality in Coq is the Leibniz equality where propositionally equal terms are meant to be equivalent with respect to all their properties. Leibniz equality will be replaced by appropriate equivalence relations that are defined in setoids. The proof engine also provides an interactive proof assistant to build proofs using specific programs called tactics. Tactics are the cornerstone of proof-search in the process of theorem proving.
\subsection{Expressing Mereology in Coq}
Models of mereology that have been constructed so far focus on the interpretation of part-whole relations letting aside the interpretation of the copula $\varepsilon$ \cite{Clay68}. We depart from this assumption and encode the Le{\'s}niewski's systems in Coq based on a single axiom for the copula. Two semantic categories are defined, $Prop$ for propositional variables and $N$ for name variables. Any constant of any semantic category is well typed by means of formerly introduced symbols. The principle of extensionality for propositional types is defined as the morphism ($\mathsf{iff}$ refer here to the usual first-order condition):
\begin{coqdoccode}
\begin{small}
\coqdocemptyline
\coqdockw{Notation} \textgravedbl a $\equiv \equiv$ b\textacutedbl := (\coqdocdefinition{iff} \coqdocvar{a} \coqdocvariable{b})  (\coqdoctac{at} \coqdockw{level} \coqdocprojection{70)}.\coqdoceol
\coqdocemptyline
\end{small}
\end{coqdoccode}
The copula $\varepsilon$ is introduced by a single axiom.
\begin{coqdoccode}
\begin{small}
\coqdocemptyline
\coqdockw{Class} \coqdocrecord{N}.\coqdoceol
\coqdockw{Parameter} \coqdef{mereo setoidsv8.Mereology.N}{N}{\coqdocaxiom{epsilon}} : \coqdocvar{N} $\rightarrow$ \coqdocvar{N} $\rightarrow$ \coqdockw{Prop}.\coqdoceol
\coqdocemptyline
\coqdockw{Notation} \textgravedbl\coqdocvar{A } '$\varepsilon$'\coqdocvariable{ b}\textacutedbl := (\coqdocaxiom{epsilon} \coqdocvar{A} \coqdocprojection{b})  (\coqdocprojection{at} \coqdocprojection{level} \coqdocprojection{70}).\coqdoceol
\coqdocemptyline
\coqdockw{Axiom} \coqdef{mereo setoidsv8.Mereology.isEpsilon}{isEpsilon}{\coqdocaxiom{isEpsilon}} : \ensuremath{\forall} \coqdocvar{A} \coqdocvar{a},
  \coqdocvar{A} $\varepsilon$ \coqdocvar{a} $\equiv \equiv$ (({\coqdocnotation{\ensuremath{\exists}}} \coqdocvar{B}, \coqdocvariable{B} $\varepsilon$ \coqdocvariable{A}) \ensuremath{\land} 
(\ensuremath{\forall} \coqdocvar{C} \coqdocvar{D}, (\coqdocvar{C} $\varepsilon$ \coqdocvar{A} \ensuremath{\land} \coqdocvar{D} $\varepsilon$ \coqdocvar{A} \ensuremath{\rightarrow} \coqdocvariable{C} $\varepsilon$ \coqdocvariable{D})) \ensuremath{\land}  \coqdoceol
\coqdocindent{6.25cm}
(\coqdockw{\ensuremath{\forall}} \coqdocvar{C},  \coqdocvar{C} $\varepsilon$ \coqdocvar{A}  \ensuremath{\rightarrow} \coqdocvar{C} $\varepsilon$ \coqdocvar{a})). \coqdoceol
\coqdocemptyline
\end{small}
\end{coqdoccode}
The axiom includes three sub-axioms. The first one requires the existence of $A$, the second assumes the uniqueness of $A$ while the last one expresses the convergence of all which can be $A$. From the full ontology, we retain two kinds of equality between names:
\begin{coqdoccode}
\begin{small}
\coqdocemptyline
\coqdockw{Parameter} \coqdef{mereo core2.Mereology.singular equality}{singular\_equality}{\coqdocaxiom{singular\_equality}}   : \coqref{mereo core2.Mereology.N}{\coqdocvar{N}} \ensuremath{\rightarrow} \coqdocvar{N} \ensuremath{\rightarrow} \coqdockw{Prop}.\coqdoceol
\coqdockw{Parameter} \coqdef{mereo core2.Mereology.D4}{D4}{\coqdocaxiom{D4}}                  : \coqdockw{\ensuremath{\forall}} \coqdocvar{A} \coqdocvar{B},  \coqref{mereo core2.Mereology.singular equality}{\coqdocaxiom{singular\_equality}} \coqdocvariable{A} \coqdocvariable{B} $\equiv\equiv$ (\coqdocvariable{A} $\varepsilon$ \coqdocvariable{B} \ensuremath{\land} \coqdocvariable{B} $\varepsilon$ \coqdocvariable{A}).  \coqdocemptyline
\coqdockw{Parameter} \coqdef{mereo core2.Mereology.weak equality}{weak\_equality}{\coqdocaxiom{weak\_equality}}       : \coqref{mereo core2.Mereology.N}{\coqdocvar{N}} \ensuremath{\rightarrow} \coqref{mereo core2.Mereology.N}{\coqdocvar{N}} \ensuremath{\rightarrow} \coqdockw{Prop}.\coqdoceol
\coqdockw{Parameter} \coqdef{mereo core2.Mereology.D6}{D6}{\coqdocaxiom{D6}}                  : \coqdockw{\ensuremath{\forall}} \coqdocvar{a} \coqdocvar{b},  \coqref{mereo core2.Mereology.weak equality}{\coqdocaxiom{weak\_equality}} \coqdocvariable{a} \coqdocvariable{b} $\equiv\equiv$ (\coqdockw{\ensuremath{\forall}} \coqdocvar{A}, (\coqdocvariable{A} $\varepsilon$ \coqdocvariable{a}) $\equiv\equiv$ (\coqdocvariable{A} $\varepsilon$ \coqdocvariable{b})).
\coqdocemptyline
\end{small}
\end{coqdoccode}
Whereas D4 defines the identity of singulars $A$ and $B$ in terms of $\varepsilon$, D6 formalizes the weak equality between $a$ and $b$.
In LO, the law of extensionality for identities written $\forall A \: B \: \Phi, \; A = B \wedge  \Phi(A) \rightarrow \Phi(B)$ which requires the equality between singular names, states the generalization of the singular equality with the higher-order functor $\Phi$ \cite{Sobo60}. Such an higher-order extensionality can be captured in Coq on the basis of name equivalence relations. Unlike the singular equality (which is irreflexive), the functor for weak equality has the properties of an equivalence relation. It is easily derivable from definition $D6$, the \textsf{iff} morphism and basic tactics.
\begin{coqdoccode}
\begin{small}
\coqdocemptyline
\coqdockw{Lemma} \coqdef{mereo setoidsv8.Mereology.OntoT23}{OntoT23}{\coqdoclemma{OntoT23}} : \coqdocdefinition{reflexive} \coqdocvar{\_} \coqref{mereo setoidsv8.Mereology.weak equality}{\coqdocdefinition{weak\_equality}}.\coqdoceol
\coqdockw{Lemma} \coqdef{mereo setoidsv8.Mereology.OntoT24}{OntoT24}{\coqdoclemma{OntoT24}} : \coqdocdefinition{symmetric} \coqdocvar{\_} \coqref{mereo setoidsv8.Mereology.weak equality}{\coqdocdefinition{weak\_equality}}.\coqdoceol
\coqdockw{Lemma} \coqdef{mereo setoidsv8.Mereology.OntoT25}{OntoT25}{\coqdoclemma{OntoT25}} : \coqdocdefinition{transitive} \coqdocvar{\_} \coqref{mereo setoidsv8.Mereology.weak equality}{\coqdocdefinition{weak\_equality}}.\coqdoceol
\coqdocemptyline
\end{small}
\end{coqdoccode} 
It follows that the weak equality is an equivalence relation. Using setoids, we can state the extensionality lemma \begin{coqdoccode}\begin{small}\coqdoclemma{MereoT16}\end{small}\end{coqdoccode} using higher-order categories including names:
\begin{coqdoccode}
\begin{small}
\coqdocemptyline
\coqdocnoindent
\coqdockw{Definition} \coqdef{mereo core3.Mereology.Phi}{Phi}{\coqdocdefinition{Phi}} (\coqdocvar{E1}: \coqdocvar{N}\ensuremath{\rightarrow}\coqdocvar{N}\ensuremath{\rightarrow}\coqdockw{Prop})(\coqdocvar{E2}:\coqdocvar{N})(\coqdocvar{E3}:\coqdocvar{N}\ensuremath{\rightarrow}\coqdocvar{N})(\coqdocvar{E4}:\coqdocvar{N}) := \coqdocvariable{E1} \coqdocvariable{E2} (\coqdocvariable{E3} \coqdocvariable{E4}).\coqdoceol
\coqdocemptyline 
\coqdocnoindent
\coqdockw{Lemma} \coqdef{mereocore3.Mereology.MereoT16}{MereoT16}{\coqdoclemma{MereoT16}} : \coqdockw{\ensuremath{\forall}}(\coqdocvar{A}:\coqdocvar{N})(\coqdocvar{B}:\coqdocvar{N})(\coqdocvar{C}:\coqdocvar{N})(\coqdocvar{Phi}:\coqdocvar{N}\ensuremath{\rightarrow}\coqdocvar{N}), \coqdocvar{A}$\varepsilon$\coqdocvar{Phi} \coqdocvar{B} \ensuremath{\land} \coqdocaxiom{singular\_equality}\coqdocvar{B}\coqdocvar{C}
\ensuremath{\rightarrow} \coqdocvariable{A}$\varepsilon$\coqdocvariable{Phi} \coqdocvariable{C}.\coqdoceol
\coqdocemptyline
\end{small}
\end{coqdoccode}
The ontology described so far consists in an axiom and a collection of lemmas currently introduced. The syntactic domain of the ontology is the set of well-formed expressions which can be expressed  w.r.t. all semantic categories that have been defined. The semantic domain include a collection of objects together with a collection of names, each collection being possibly empty. 
\par The mereology adds a number of functors such as the \textsf{Kl} functor which describes the collective classes. For that purpose, two name forming functors are required, i.e., \textsf{pt} (for \textgravedbl part of\textacutedbl) and \textsf{el} (for \textgravedbl element of\textacutedbl) which belong to the semantic category $N \rightarrow N$.
\begin{coqdoccode}
\begin{small}
\coqdocemptyline
\coqdockw{Parameter} \coqdocaxiom{pt} \coqdocindent{1em}: \coqdocvar{N} \ensuremath{\rightarrow} \coqdocvar{N}.\coqdoceol
\coqdockw{Parameter} \coqdocaxiom{el} \coqdocindent{1.1em}: \coqdocvar{N} \ensuremath{\rightarrow} \coqdocvar{N}.\coqdoceol
\coqdocemptyline
\end{small}
\end{coqdoccode}
With such declarations, a partitive relation \begin{coqdoccode}\begin{small} \coqdocvar{A} $\varepsilon$ (\coqdocaxiom{pt} \coqdocvar{b})\end{small}\end{coqdoccode} (read \textgravedbl A is a part of b\textacutedbl) and a membership relation \begin{coqdoccode}\begin{small} \coqdocvar{A} $\varepsilon$ (\coqdocaxiom{el} \coqdocvar{b})\end{small}\end{coqdoccode} (read \textgravedbl A is an element of b\textacutedbl) are introduced. The former is taken as a primitive of mereology and two axioms govern its properties. They respectively state the asymmetry and transitivity of the \textsf{$\varepsilon$ pt} relation. Notice that \textsf{$\varepsilon$ pt}, \textsf{$\varepsilon$ el} and \textsf{$\varepsilon$ Kl} respectively correspond to what are termed \textgravedbl proper-part-of\textacutedbl, \textgravedbl part-of\textacutedbl and \textgravedbl sum\textacutedbl in Tarski's mereogeometry. If we were eliminating $\varepsilon$ we would remove the notion of distributive class \cite{Clay74} and consequently the interplay between the two notions of class, aspects of mereology which Le{\'s}niewski thought were so important that he constructed LO in order to describe them clearly. For the sake of simplicity, definition \begin{coqdoccode}\begin{small}\coqdocdefinition{ProperPart\_of}\end{small}\end{coqdoccode} represents an alias for handling \textit{proper part of} relations and should not be confused with a mereological definition using equivalence ($\equiv\equiv$). 
\begin{coqdoccode}
\begin{small}
\coqdocemptyline
\coqdockw{Definition} \coqdocdefinition{ProperPart\_of}   \coqdocindent{2.1em}:= \coqdockw{fun} \coqdocvar{A} \coqdocvar{B} \ensuremath{\Rightarrow} \coqdocvariable{A} $\varepsilon$ \coqdocaxiom{pt} \coqdocvariable{B}. \coqdoceol
\coqdocemptyline
\coqdockw{Axiom} \coqdocaxiom{asymmetric\_ProperPart}  \coqdocindent{0.9em}: \coqdockw{\ensuremath{\forall}} \coqdocvar{A} \coqdocvar{B}, \coqdocvariable{A} $\varepsilon$ \coqdocaxiom{pt} \coqdocvariable{B} \ensuremath{\rightarrow} \coqdocvariable{B} $\varepsilon$ (\coqdocaxiom{distinct} (\coqdocaxiom{pt} \coqdocvariable{A})). (*A1*)\coqdoceol
\coqdockw{Axiom} \coqdocaxiom{transitive\_PoperPart} \coqdocindent{1.8em}: \coqdocclass{Transitive} \coqdocdefinition{ProperPart\_of}. \coqdocindent{2.5cm} (*A2*)\coqdocemptyline
\end{small}
\end{coqdoccode}
The term \textsf{distinct} whose type is $N \rightarrow N$ is such that from $b$, the type \textsf{distinct} $b$ will capture the meaning of \textgravedbl to be different from b\textacutedbl. The next mereological definition follows the structure of ontological definitions and expresses \textgravedbl being an element of\textacutedbl as a name-forming functor.
\begin{coqdoccode}
\begin{small}
\coqdocemptyline
\coqdockw{Parameter} \coqdocaxiom{MD1}                : \coqdockw{\ensuremath{\forall}} \coqdocvar{A} \coqdocvar{B}, \coqdocvariable{A} $\varepsilon$ \coqdocaxiom{el} \coqdocvariable{B} $\equiv\equiv$ (\coqdocvariable{A} $\varepsilon$ \coqdocvariable{A} \ensuremath{\land} (\coqdocaxiom{singular\_equality} \coqdocvariable{A} \coqdocvariable{B} \ensuremath{\lor} 
\coqdocdefinition{ProperPart\_of} \coqdocvariable{A} \coqdocvariable{B})).\coqdoceol
\coqdocemptyline
\end{small}
\end{coqdoccode}
To formulate the next axioms, we first introduce the mereological class, which is also defined as a name forming functor in the usual way:
\begin{coqdoccode}
\begin{small}
\coqdocemptyline
\coqdockw{Parameter} \coqdocaxiom{MD2}                : \coqdockw{\ensuremath{\forall}} \coqdocvar{A} \coqdocvar{a}, \coqdocvariable{A} $\varepsilon$ \coqdocaxiom{Kl} \coqdocvariable{a} $\equiv\equiv$ 
(\coqdocvariable{A} $\varepsilon$ \coqdocvariable{A} \ensuremath{\land}
 (\ensuremath{\exists} \coqdocvar{B}, \coqdocvariable{B} $\varepsilon$ \coqdocvariable{a}) \ensuremath{\land}
(\coqdockw{\ensuremath{\forall}} \coqdocvar{B}, \coqdocvariable{B} $\varepsilon$ \coqdocvariable{a} \ensuremath{\rightarrow} \coqdocvariable{B} $\varepsilon$ \coqdocaxiom{el} \coqdocvariable{A}) \ensuremath{\land}  \coqdoceol
\coqdocindent{6.75cm}
(\coqdockw{\ensuremath{\forall}} \coqdocvar{B}, \coqdocvariable{B} $\varepsilon$ \coqdocaxiom{el} \coqdocvariable{A} \ensuremath{\rightarrow} \ensuremath{\exists} \coqdocvar{C} \coqdocvar{D}, (\coqdocvariable{C} $\varepsilon$ \coqdocvariable{a} \ensuremath{\land} \coqdocvariable{D} $\varepsilon$ \coqdocaxiom{el} \coqdocvariable{C} \ensuremath{\land} \coqdocvariable{D} $\varepsilon$ \coqdocaxiom{el} \coqdocvariable{B}))).\coqdoceol
\coqdocemptyline
\end{small}
\end{coqdoccode}
The third axiom states that a mereological class is unique.
\begin{coqdoccode}
\begin{small}
\coqdocemptyline
\coqdockw{Axiom} \coqdocaxiom{Kl\_uniqueness}    : \coqdockw{\ensuremath{\forall}} \coqdocvar{A} \coqdocvar{B} \coqdocvar{a}, (\coqdocvariable{A} $\varepsilon$ \coqdocaxiom{Kl} \coqdocvariable{a} \coqdocnotation{\ensuremath{\land}} \coqdocvariable{B} $\varepsilon$ \coqdocaxiom{Kl} \coqdocvariable{a}) \ensuremath{\rightarrow} 
\coqdocdefinition{singular\_equality} \coqdocvariable{A} \coqdocvariable{B}. (*A3*)
\coqdocemptyline
\end{small}
\end{coqdoccode}
The fourth one postulates the class existence.
\begin{coqdoccode}
\begin{small}
\coqdocemptyline
\coqdockw{Axiom} \coqdocaxiom{Kl\_existence}     : \coqdockw{\ensuremath{\forall}} \coqdocvar{A} \coqdocvar{a}, \coqdocvariable{A} $\varepsilon$ \coqdocvariable{a} \ensuremath{\rightarrow} \coqdocnotation{\ensuremath{\exists}} \coqdocvar{B}, \coqdocvariable{B} $\varepsilon$ \coqdocaxiom{Kl} \coqdocvariable{a}. \coqdocindent{3.2cm}(*A4*) 
\coqdocemptyline
\end{small}
\end{coqdoccode}
No additional rules of inference are added in mereology. Many lemmas can be proved on the basis of mereological axioms. We only mention the following lemmas since they illustrate the crucial difference between set theory and Le{\'s}niewski's mereology.
\begin{coqdoccode}
\begin{small}
\coqdocemptyline
\coqdockw{Lemma} \coqdef{mereo core3.Mereology.MereoT26}{MereoT26}{\coqdoclemma{MereoT26}} : \coqdockw{\ensuremath{\forall}} \coqdocvar{A} \coqdocvar{a}, \coqdocvariable{A} $\varepsilon$ \coqdocaxiom{Kl} \coqdocvar{a} \ensuremath{\rightarrow} \coqdocvar{A} $\varepsilon$ \coqdocaxiom{Kl} (\coqdocaxiom{Kl} \coqdocvar{a}). \coqdoceol
\coqdockw{Lemma} \coqdef{mereo core3.Mereology.MereoT27}{MereoT27}{\coqdoclemma{MereoT27}} : \ensuremath{\forall} \coqdocvar{A} \coqdocvar{a}, \coqdocvariable{A} $\varepsilon$ \coqdocaxiom{Kl} (\coqdocaxiom{Kl} \coqdocvar{a}) \ensuremath{\rightarrow} \coqdocvar{A} $\varepsilon$ \coqdocaxiom{Kl} \coqdocvar{a}. \coqdoceol
\coqdockw{Lemma} \coqdef{mereo setoidsv12.Mereology.MereoT29}{MereoT29}{\coqdoclemma{MereoT29}} : \coqdockw{\ensuremath{\forall}} \coqdocvar{A}, \coqdocnotation{\ensuremath{\lnot}}(\coqdocvariable{A} $\varepsilon$ \coqref{mereo setoidsv12.Mereology.Kl}{\coqdocaxiom{Kl}} \coqref{mereo setoidsv12.Mereology.empty}{\coqdocaxiom{empty}}).\coqdoceol
\coqdocemptyline
\end{small}
\end{coqdoccode}
Lemmas \begin{coqdoccode}\begin{small}\coqdoclemma{MereoT26}\end{small}\end{coqdoccode} and \begin{coqdoccode}\begin{small}\coqdoclemma{MereoT27}\end{small}\end{coqdoccode} show that in mereology, the class of $a$ and the class of the class of $a$ stand for the same object, which is in contrast with set theory. Lemma \begin{coqdoccode}\begin{small}\coqdoclemma{MereoT29}\end{small}\end{coqdoccode} asserts that the class resulting from the empty name does not exist. The system of mereology is consistent which can be proved by many ways (e.g., with an appropriate model for mereology such as \cite{Clay68} or from protothetic \cite{Lejewski69}). Additional name-forming functors can be added to the theory to enhance its expressiveness. Le{\'s}niewski added the \begin{coqdoccode}\begin{small}\coqdocaxiom{coll}\end{small}\end{coqdoccode}(collection of) functors, while Lejewski added the \begin{coqdoccode}\begin{small}\coqdocaxiom{ov}\end{small}\end{coqdoccode}
(overlap) functor.
\begin{coqdoccode}
\begin{small}
\coqdocemptyline
\coqdockw{Parameter} \coqdef{mereo core3.Mereology.coll}{coll}{\coqdocaxiom{coll}} \coqdocindent{1.9em}: \coqdocvar{N} \ensuremath{\rightarrow} \coqdocvar{N}.  \coqdocindent{3.05cm}(* collection of *)\coqdoceol
\coqdockw{Parameter} \coqdocaxiom{ov}      \coqdocindent{2.3em}: \coqdocvar{N} \ensuremath{\rightarrow} \coqdocvar{N}. \coqdocindent{3.05cm}(* overlap *)\coqdoceol  
\coqdocemptyline
\coqdockw{Parameter} \coqdocaxiom{MD3}         : \coqdockw{\ensuremath{\forall}} \coqdocvar{P} \coqdocvar{a}, \coqdocvariable{P} $\varepsilon$ \coqdocaxiom{coll} \coqdocvariable{a} $\equiv\equiv$ (\coqdocvariable{P} $\varepsilon$ \coqdocvariable{P} \ensuremath{\land} \coqdockw{\ensuremath{\forall}} \coqdocvar{Q}, \coqdocvariable{Q} $\varepsilon$ \coqdocaxiom{el} \coqdocvar{P} \ensuremath{\rightarrow} \ensuremath{\exists} \coqdocvar{C} \coqdocvar{D}, (\coqdocvar{C} $\varepsilon$ \coqdocvar{a} \ensuremath{\land} \coqdocvar{C} $\varepsilon$ \coqdocaxiom{el} \coqdocvar{P} \ensuremath{\land} 
\coqdoceol
\coqdocindent{7cm} 
 \coqdocvar{D} $\varepsilon$ \coqdocaxiom{el} \coqdocvar{C} \ensuremath{\land} \coqdocvar{D} $\varepsilon$ \coqdocaxiom{el} \coqdocvar{Q})).\coqdoceol
\coqdockw{Parameter} \coqdocaxiom{MD7}                : \coqdockw{\ensuremath{\forall}} \coqdocvar{P} \coqdocvar{Q}, \coqdocvariable{P} $\varepsilon$ \coqdocaxiom{ov} \coqdocvariable{Q} $\equiv\equiv$ (\coqdocvariable{P} $\varepsilon$ \coqdocvariable{P} \ensuremath{\land} \ensuremath{\exists} \coqdocvar{C}, (\coqdocvariable{C} $\varepsilon$ \coqdocaxiom{el} \coqdocvariable{P} \ensuremath{\land} \coqdocvariable{C} $\varepsilon$ \coqdocaxiom{el} \coqdocvariable{Q})).\coqdoceol
\coqdocemptyline
\end{small}
\end{coqdoccode} 
A useful lemma in the following states that if an object $P$ is an $a$, then $P$ is a collection of objects $a$.
\begin{coqdoccode}
\begin{small}
\coqdocemptyline
\coqdockw{Lemma} \coqdocaxiom{XIII} 		: \ensuremath{\forall} \coqdocvar{a} \coqdocvar{P}, \coqdocvar{P} $\varepsilon$ \coqdocvar{a} $\rightarrow$ (\coqdocvar{P} $\varepsilon$ \coqdocaxiom{coll} \coqdocvar{a}).
\coqdocemptyline
\end{small}
\end{coqdoccode} 
In addition, it is provable that if an object is a sub-collection of another, then it is equivalent to say that it is a part of this object.
\begin{coqdoccode}
\begin{small}
\coqdocemptyline
\coqdockw{Lemma} \coqdocaxiom{MereoT44} 		: \ensuremath{\forall} \coqdocvar{A} \coqdocvar{B}, \coqdocvar{B} $\varepsilon$  \coqdocaxiom{el} \coqdocvar{A} $\equiv\equiv$ (\coqdocvar{B} $\varepsilon$ \coqdocaxiom{subcoll} \coqdocvar{A}).
\coqdocemptyline
\end{small}
\end{coqdoccode} 
\par To clarify the notion of class we describe an example extracted from \cite{Gessler05} in figure \ref{fig1} which represents a rectangle labeled $R$ including geometric parts. Let us suppose that the generic name $a$ describes the squares of $R$. In other words, we get:
$A \: \varepsilon \: Kl (square \: of \: R)$ with the obvious meaning \textgravedbl A is the class of squares of R\textacutedbl.
\begin{figure}[ht]
\begin{center}
\includegraphics[scale=0.5]{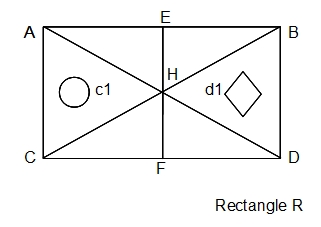}
\caption{Example of a collective class.}\label{fig1}
\end{center}
\end{figure}
The four parts of the class definition \texttt{MD2} are easily provable:

\begin{tabular}{l p{11.1cm}}
1. & the name $A$ denotes an object, i.e., rectangle $R$,\\
2. & there exists a $B$ such that $B \: \varepsilon \: (square \: of \: R)$ (e.g., $AECF$),\\
3. & any square of $R$ is an element of $A$ e.g., $EBFD \: \varepsilon \: el \: A$,\\
4. & for any element of $R$, e.g., the triangle $BHD$, an object $C$ which is a square of $R$ should exists, while another object $D$ should also exist as both an element of $C$ and an element of $B$. If $C$ denotes $EBFD$ and $D$, the diamond $d1$, then we easily show that $d1$ is both element of $EBFD$ and $BHD$. We can also consider a more complex object. Let us define $B$ as the collective object including $c1$ (a circle) and $BHD$. Then, there exists $C$ which is a square of $R$ e.g., $EBFD$ and $D$ which is both element of $C$ and $B$, e.g., the diamond $d1$.\\
\end{tabular}

\vspace{1em}
\begin{figure}[ht]
\begin{center}
\includegraphics[scale=0.35]{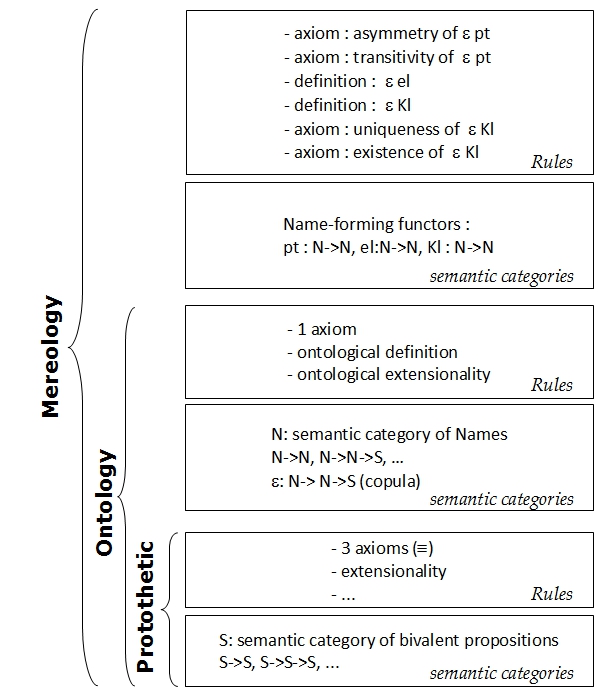}
\caption{An overview of Le{\'s}niewski's Mereology.}\label{fig2}
\end{center}
\end{figure}
\par We can observe that the content of collective classes is much more expressive than distributive classes of set theory. The syntactical aspect of Le{\'s}niewski's Mereology is summarized in figure \ref{fig2}. Each sub-theory defines (i) available categories and well-formed expressions, (ii) the assumed set of axioms and (ii) the set of inference rules. 
\section{The Type-theoretical Mereogeometry}\label{Mereogeometry}
We first recall the set of definitions suggested by Tarski to support the mereogeometry of solids. In the second subsection, we discuss some issues about Tarski's work while the last subsection presents the type-theoretical account of Tarski mereogeometry.
\subsection{Tarski's Definitions}
Tarski starts out from some mereological notions such as being a part of and sum and then, provides a minimal set of axioms constituting the mereological part on which the rest of the theory relies. This part will be replaced in the present work by the mereology developed in section \ref{TypeMereo}. Apart from this mereological part, Tarski takes \textit{sphere} (which will be called ball here) as the only primitive notion specific to the geometry of solids. Assuming these notions suffices to formulate a set of nine geometrical definitions which constitutes the basis of mereogeometry.
\begin{defi}
The ball $A$ is externally tangent to the ball $B$ if (i) the ball $A$ is disjoint from the ball $B$ and (ii) given two balls $X$ and $Y$ containing as part the ball $A$ and disjoint from the ball $B$, at least one of them is part of the other. 
\end{defi}
\begin{defi}
The ball $A$ is internally tangent to the ball $B$ if (i) the ball $A$ is a proper part of the ball $B$ and (ii) given two balls $X$ and $Y$ containing the ball $A$ as a part and forming part of the ball $B$, at least one of them is a part of the other.
\end{defi}
\begin{defi}
The balls $A$ and $B$ are externally diametrically tangent to the ball $C$ if (i) each of the balls $A$ and $B$ is externally tangent to the ball $C$ and (ii) given two balls $X$ and $Y$ disjoint from the ball $C$ and such that $A$ is part of $X$ and $B$ a part of $Y$, $X$ is disjoint from $Y$.
\end{defi}
\begin{defi}
The balls $A$ and $B$ are internally diametrically tangent to the ball $C$ if (i) each of the balls $A$ and $B$ is internally tangent to the ball $C$ and (ii) given two balls $X$ and $Y$ disjoint from the ball $C$ and such that the ball $A$ is externally tangent to $X$ and $B$ to $Y$, $X$ is disjoint from $Y$.
\end{defi}
\begin{defi}
The ball $A$ is concentric with the ball $B$ if one of the following conditions is satisfied: (i) the balls $A$ and $B$ are identical, (ii) the ball $A$ is a proper part of $B$ and besides, given two balls $X$ and $Y$ externally diametrically tangent to $A$ and internally tangent to $B$, these balls are internally diametrically tangent to $B$ and (iii) the ball $B$ is a proper part of $A$ and besides, given two balls $X$ and $Y$ externally diametrically tangent to $B$ and internally tangent to $A$, these balls are internally diametrically tangent to $A$.
\end{defi}
\begin{defi}
A point is the class of all balls which are concentric with a given ball.
\end{defi}
\begin{defi}
The points $A$ and $B$ are equidistant from the point $C$ if there exists a ball $X$ which belongs as element to the point $C$ and which satisfies the following condition: no ball $Y$ belonging as element to the point $A$ or to the point $B$ is a part of $X$ or is disjoint from $X$.
\end{defi}
\begin{defi}
A solid is an arbitrary sum of balls.
\end{defi}
\begin{defi}
The point $P$ is an interior point of the solid $B$ if there exists a ball $A$ which is at the same time an element of the point $P$ and a part of the solid $B$. 
\end{defi}
The 2D version of these definitions is illustrated in figure \ref{fig3}.
\begin{figure}[ht]
\begin{center}
\includegraphics[scale=0.5]{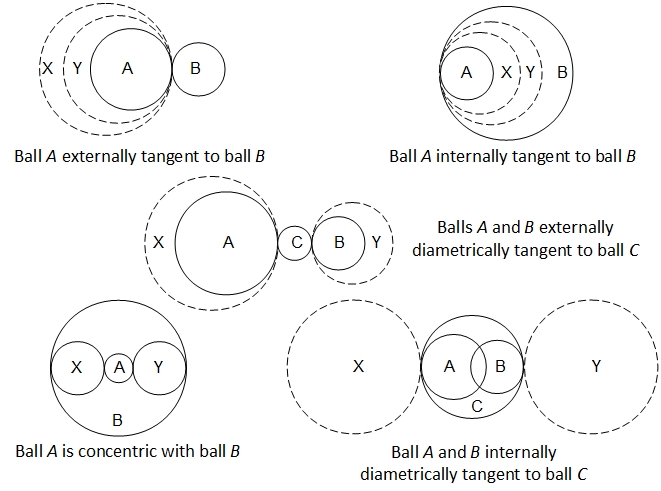}
\caption{Tarski's primitive definitions.}\label{fig3}
\end{center}
\end{figure}
\subsection{Some issues in Tarski's Work}
While in the sketchy paper of Tarski \cite{Tarski29} the logical background seems rather unclear \cite{Grusz08,Betti12} we depart from all approaches stemming from a set-based framework and rather argue that an axiomatization of Tarski's work based solely on Le{\'s}niewski's Mereology is possible. Instead of introducing axioms for part-whole relations, we rather rely on the Le{\'s}niewski's mereology together with its set of axioms and theorems. In fact Tarski applies Le{\'s}niewski's mereology and does not extend it. We will justify these claims by discussing the successive issues highlighted in the recent survey about Tarski's work \cite{Betti12} where the author points out that the role of Le{\'s}niewski's systems is marginal. 
\par As argued by the author, Tarski's work can be given a meaningful account to some otherwise logically unclear parts of it, only if one assume a type theoretical approach. We have shown in section \ref{TypeMereo} that using a type-theoretical background substituting protothetic in Le{\'s}niewski's mereology is a possible approach that is coherent with the work of Henkin \cite{Henkin63}.
Another remark concerns the notions of being a part of and sum. While being a part of is said to be a primitive notion in Lesniewski's Mereology, but sum is not. We advocate that the references of Tarski to Le{\'s}niewski's work are syntactically exact (e.g., the word \textgravedbl sum\textacutedbl in definition 8 refers to the definition of \textgravedbl collection\textacutedbl in mereology \cite{Sinisi83}). 
\par Then, the author explains that the notion of domain of discourse remains unclear, i.e., that the range of the quantifiers can be that of solids or that of balls. We rather advocate for a more uniform solution in which balls and solids are names, and more precisely plurals in the sense of Le{\'s}niewski. Such an assumption yields a quantification over variables that are always of type $N$ (name) and avoids the issues concerning the domain of discourse. She also says that Tarski is freely mixing notions from Le{\'s}niewski's mereology (e.g., collective classes) with those of the system of \textit{Principia mathematica}, e.g., distributive classes. However, in Le{\'s}niewski's systems collective and distributive classes coexist \cite{Clay74}. It follows that instead of expressing Tarski's mereogeometry in a set-based framework \cite{Grusz08}, a more coherent picture can be proposed using Le{\'s}niewski's axiomatization. Collective classes are addressed with the \begin{coqdoccode}\begin{small}$\varepsilon$ \coqdocvar{Kl}\end{small}\end{coqdoccode} functor while distributive classes require new categories and definitions (in the form of equivalences). The Russellian notion of class that appears in definition 6, is solved using a definition in the sense of Le{\'s}niewski by constraining the quantification through concentricity (see next subsection).
\par A first benefit of the proposed type-theoretical formulation is that some axioms of Mereogeometry can be proved as theorems. This fact enhances the soundness and simplifies the whole theory. A second advantage is that many critiques of Tarski's paper can be avoided such as the difficulty to assess the domain of discourse (see, e.g., \cite{Betti12}).
\subsection{Expressing Definitions of Mereogeometry in Coq}
The first commitment concerns balls\footnote{Called spheres in Tarski's paper.} and solids. These concepts are subject to discussion for they are seen as distinct distributive classes. Using the nominalist view of Le{\'s}niewski, these names are considered as constant plurals, since (i) they denote a constant plurality and (ii) they refer to collections of objects. In other words, they are instances of names ($N$) which are constrained to appear as the right argument of the copula ($\varepsilon$).
In LO, constant plurals (e.g., \textgravedbl empty name\textacutedbl or \textgravedbl universal name\textacutedbl) are defined in a similar way: their category must be defined first (i.e., $N$) and then, a definition explains their property in the form of an equivalence. Since \textgravedbl balls\textacutedbl is a primitive, no specific property is required. Alternatively, solids does not refer to a primitive name. As a consequence, it belongs to the category $N$ and it is defined as a particular collection of balls (see definition 8 below). 
\begin{coqdoccode}
\begin{small}
\coqdocemptyline
\coqdockw{Parameter} \coqdocaxiom{balls}  \coqdocindent{1.67cm}: \coqdocvar{N}. \coqdoceol
\coqdockw{Parameter} \coqdocaxiom{solids}  \coqdocindent{1.54cm}: \coqdocvar{N}. \coqdoceol
\coqdocemptyline
\end{small}
\end{coqdoccode}
In such a way the domain of discourse is that of truth values and names (basic categories) with a restriction to balls and solids. Notice that the restriction is not given by types but rather by theorems which directly constrain their use.
Using the \begin{coqdoccode}\begin{small}$\varepsilon$ \coqdocvar{pt}\end{small}\end{coqdoccode} functor, the four axioms of mereology and derived theorems, then it is easy to prove that the relation \begin{coqdoccode}\begin{small}$\varepsilon$\coqdocvar{el}\end{small}\end{coqdoccode} is a partial order:
\begin{coqdoccode}
\begin{small}
\coqdocemptyline
\coqdockw{Parameter} \coqdocaxiom{PartOf}  : \coqdocdefinition{relation} \coqdocvar{N}.\coqdoceol
\coqdockw{Parameter} \coqdocaxiom{MD11}               : \coqdockw{\ensuremath{\forall}} \coqdocvar{A} \coqdocvar{a}, \coqdocaxiom{PartOf} \coqdocvariable{A} \coqdocvariable{a} $\equiv\equiv$ (\coqdocvariable{A} $\varepsilon$ \coqdocaxiom{el} \coqdocvariable{a}).\coqdoceol
\coqdocemptyline
\coqdockw{Lemma} \coqdoclemma{Reflexive\_Element\_of} \coqdocindent{3.80em}: \coqdocclass{Reflexive} \coqdocaxiom{PartOf}. \coqdoceol
\coqdockw{Lemma} \coqdoclemma{AntiSymmetric\_Element\_of} \coqdocindent{1.30em}: \coqdocclass{Antisymmetric} \coqdocaxiom{PartOf}.\coqdoceol
\coqdockw{Lemma} \coqdoclemma{Transitive\_Element\_of} \coqdocindent{3.50em}: \coqdocclass{Transitive} \coqdocaxiom{PartOf}.\coqdoceol
\coqdocemptyline
\coqdockw{Theorem} \coqdoclemma{PartOf\_is\_partial\_order} \coqdocindent{2.15em}: \coqdocclass{POrder} \coqdocaxiom{PartOf}.\coqdoceol
\coqdocemptyline
\end{small}
\end{coqdoccode}
We first introduce the short-hand symbol $\leqslant$ which stands for \begin{coqdoccode}\begin{small}$\varepsilon$ \coqdocaxiom{el}\end{small}\end{coqdoccode}. Then, several relations among balls are defined such as concentricity, relying on the intended interpretation of the primitives. Points are defined as (mereological) collections of concentric balls. Equidistance among points makes use of properties of concentric balls while Euclidean axioms are able to constrain equidistance. We introduce successively:
\begin{coqdoccode}
\begin{small}
\coqdocemptyline
\coqdockw{Parameter} \coqdocaxiom{et}  \coqdocindent{1.54cm}: \coqdocvar{N} \ensuremath{\rightarrow} \coqdocvar{N}. \coqdoceol
\coqdockw{Parameter} \coqdocaxiom{it}  \coqdocindent{1.6cm}: \coqdocvar{N} \ensuremath{\rightarrow} \coqdocvar{N}.
\coqdoceol
\coqdockw{Parameter} \coqdocaxiom{edt}    \coqdocindent{1.36cm}: \coqdocvar{N} \ensuremath{\rightarrow} \coqdocvar{N} \ensuremath{\rightarrow} \coqdocvar{N}. \coqdoceol
\coqdockw{Parameter} \coqdef{mereogeometry.TarskiGeometry.idt}{idt}{\coqdocaxiom{idt}} \coqdocindent{1.43cm}: \coqdocvar{N} \ensuremath{\rightarrow} \coqdocvar{N} \ensuremath{\rightarrow} \coqdocvar{N}. \coqdoceol
\coqdockw{Parameter} \coqdocaxiom{con}   \coqdocindent{1.33cm}: \coqdocvar{N} \ensuremath{\rightarrow} \coqdocvar{N}. \coqdoceol
\coqdockw{Parameter} \coqdocaxiom{point} \coqdocindent{1.08cm}: \coqdocvar{N} \ensuremath{\rightarrow} \coqdocvar{N}. \coqdoceol
\coqdockw{Parameter} \coqdocaxiom{equid} \coqdocindent{1.06cm}: \coqdocvar{N} \ensuremath{\rightarrow} \coqdocvar{N} \ensuremath{\rightarrow} \coqdocvar{N}. \coqdoceol
\coqdockw{Parameter} \coqdocaxiom{ipoint} \coqdocindent{1.02cm}: \coqdocvar{N} \ensuremath{\rightarrow} \coqdocvar{N}. 
\coqdocemptyline
\end{small}
\end{coqdoccode}
\par The relations of external tangency (\begin{coqdoccode}\begin{small}$\varepsilon$ \coqdocaxiom{et}\end{small}\end{coqdoccode}), internal tangency (\begin{coqdoccode}\begin{small}$\varepsilon$ \coqdocaxiom{it}\end{small}\end{coqdoccode}), external diametricity (\begin{coqdoccode}\begin{small}$\varepsilon$ \coqdocaxiom{edt}\end{small}\end{coqdoccode}), internal diametricity (\begin{coqdoccode}\begin{small}$\varepsilon$ \coqdocaxiom{idt}\end{small}\end{coqdoccode}), concentricity (\begin{coqdoccode}\begin{small}$\varepsilon$ \coqdocaxiom{con}\end{small}\end{coqdoccode}), point (\begin{coqdoccode}\begin{small}\coqdocaxiom{point}\end{small}\end{coqdoccode}), equidistance (\begin{coqdoccode}\begin{small}$\varepsilon$ \coqdocaxiom{equi}\end{small}\end{coqdoccode}) and interior point (\begin{coqdoccode}\begin{small}$\varepsilon$ \coqdocaxiom{ipoint}\end{small}\end{coqdoccode}) are defined using the already defined name functors. 
\par Definition 1: external tangency.
\begin{coqdoccode}
\begin{small}
\coqdocemptyline
\coqdockw{Parameter} \coqdocaxiom{ET}      : \ensuremath{\forall} \coqdocvar{A} \coqdocvar{B}, \coqdocvar{A} $\varepsilon$ \coqdocaxiom{et} \coqdocvar{B}  $\equiv\equiv$ ((\coqdocvar{A} $\varepsilon$ \coqdocvar{balls}) \ensuremath{\land} (\coqdocvar{B} $\varepsilon$ \coqdocvar{balls}) \ensuremath{\land} (\coqdocvar{A} $\varepsilon$ \coqdocaxiom{ext} \coqdocvar{B})  \ensuremath{\land} \coqdoceol 
\coqdocindent{8.9em} \ensuremath{\forall} \coqdocvar{X} \coqdocvar{Y}, ((\coqdocvar{X} $\varepsilon$ \coqdocvar{balls}) 
\ensuremath{\land} (\coqdocvar{Y} $\varepsilon$ \coqdocvar{balls}) \ensuremath{\land} 
(\coqdocvar{A} $\leqslant$ \coqdocvar{X} \ensuremath{\land} \coqdocvar{X} $\varepsilon$ \coqdocaxiom{ext} \coqdocvar{B}) \ensuremath{\land}
\coqdoceol 
\coqdocindent{8.9em} (\coqdocvar{A} $\leqslant$ \coqdocvar{Y} \ensuremath{\land} \coqdocvar{Y} $\varepsilon$ \coqdocaxiom{ext} \coqdocvar{B})) $\rightarrow$ (\coqdocvar{X} $\leqslant$ \coqdocvar{Y} \ensuremath{\lor} \coqdocvar{Y} $\leqslant$ \coqdocvar{X})).\coqdoceol
\coqdocemptyline
\end{small}
\end{coqdoccode} 
Definition 2: internal tangency.
\begin{coqdoccode}
\begin{small}
\coqdocemptyline
\coqdockw{Parameter} \coqdocaxiom{IT}      : \ensuremath{\forall} \coqdocvar{A} \coqdocvar{B}, \coqdocvar{A} $\varepsilon$ \coqdocaxiom{it} \coqdocvar{B}  $\equiv\equiv$ ((\coqdocvar{A} $\varepsilon$ \coqdocvar{balls}) \ensuremath{\land} (\coqdocvar{B} $\varepsilon$ \coqdocvar{balls}) \ensuremath{\land} (\coqdocvar{A} $<$ \coqdocvar{B})  \ensuremath{\land} \coqdoceol 
\coqdocindent{8.9em} \ensuremath{\forall} \coqdocvar{X} \coqdocvar{Y}, ((\coqdocvar{X} $\varepsilon$ \coqdocvar{balls}) 
\ensuremath{\land} (\coqdocvar{Y} $\varepsilon$ \coqdocvar{balls}) \ensuremath{\land} 
(\coqdocvar{A} $\leqslant$ \coqdocvar{X} \ensuremath{\land} \coqdocvar{X} $\leqslant$ \coqdocvar{B}) \ensuremath{\land}
\coqdoceol 
\coqdocindent{8.9em} (\coqdocvar{A} $\leqslant$ \coqdocvar{Y} \ensuremath{\land} \coqdocvar{Y} $\leqslant$ \coqdocvar{B})) $\rightarrow$ (\coqdocvar{X} $\leqslant$ \coqdocvar{Y} \ensuremath{\lor} \coqdocvar{Y} $\leqslant$ \coqdocvar{X})).\coqdoceol
\coqdocemptyline
\end{small}
\end{coqdoccode} 
Definition 3: external diametrical tangency.
\begin{coqdoccode}
\begin{small}
\coqdocemptyline
\coqdockw{Parameter} \coqdocaxiom{EDT}  : \ensuremath{\forall} \coqdocvar{A} \coqdocvar{B} \coqdocvar{C}, \coqdocvar{A} $\varepsilon$ \coqdocaxiom{edt} \coqdocvar{B} \coqdocvar{C} $\equiv\equiv$ ((\coqdocvar{A} $\varepsilon$ \coqdocvar{balls}) \ensuremath{\land} (\coqdocvar{B} $\varepsilon$ \coqdocvar{balls}) \ensuremath{\land} \coqdoceol
\coqdocindent{8.9em}
(\coqdocvar{C} $\varepsilon$ \coqdocvar{balls}) \ensuremath{\land} (\coqdocvar{B} $\varepsilon$ \coqdocaxiom{et} \coqdocvar{A}) \ensuremath{\land} (\coqdocvar{C} $\varepsilon$ \coqdocaxiom{et} \coqdocvar{A}) \ensuremath{\land} 
\ensuremath{\forall} \coqdocvar{X} \coqdocvar{Y}, ((\coqdocvar{X} $\varepsilon$ \coqdocaxiom{balls}) \ensuremath{\land} 
\coqdoceol
\coqdocindent{8.90em} (\coqdocvar{Y} $\varepsilon$ \coqdocvar{balls} ) \ensuremath{\land} (\coqdocvar{B} $\leqslant$ \coqdocvar{X} \ensuremath{\land} \coqdocvar{X} $\varepsilon$ \coqdocaxiom{ext} \coqdocvar{A} ) \ensuremath{\land} (\coqdocvar{C} $\leqslant$ \coqdocvar{Y}  \ensuremath{\land} \coqdocvar{Y} $\varepsilon$ \coqdocaxiom{ext} \coqdocvar{A}))  \coqdoceol
\coqdocindent{8.90em} $\rightarrow$  (\coqdocvar{X} $\varepsilon$ \coqdocaxiom{ext} \coqdocvar{Y})).\coqdoceol
\coqdocemptyline
\end{small}
\end{coqdoccode} 
Definition 4: internal diametrical tangency.
\begin{coqdoccode}
\begin{small}
\coqdocemptyline
\coqdockw{Parameter} \coqdocaxiom{IDT}   : \ensuremath{\forall} \coqdocvar{A} \coqdocvar{B} \coqdocvar{C}, \coqdocvar{A} $\varepsilon$\coqdocaxiom{idt} \coqdocvar{B} \coqdocvar{C} $\equiv\equiv$ ((\coqdocvar{A} $\varepsilon$ \coqdocvar{balls}) \ensuremath{\land} (\coqdocvar{B} $\varepsilon$ \coqdocvar{balls}) \ensuremath{\land} (\coqdocvar{C} $\varepsilon$ \coqdocvar{balls})
\coqdoceol
\coqdocindent{8.90em} \ensuremath{\land} (\coqdocvar{B} $\varepsilon$ \coqdocaxiom{it} \coqdocvar{A}) \ensuremath{\land} (\coqdocvar{C} $\varepsilon$ \coqdocaxiom{it} \coqdocvar{A}) \ensuremath{\land} 
\ensuremath{\forall} \coqdocvar{X} \coqdocvar{Y}, (((( \coqdocvar{X} $\varepsilon$ \coqdocvar{balls}) \ensuremath{\land} (\coqdocvar{Y} $\varepsilon$ \coqdocvar{balls}) \coqdoceol
\coqdocindent{8.90em}
\ensuremath{\land} (\coqdocvar{X} $\varepsilon$ \coqdocaxiom{ext} \coqdocvar{A}) \ensuremath{\land} (\coqdocvar{Y} $\varepsilon$ \coqdocaxiom{ext} \coqdocvar{A}) \ensuremath{\land} (\coqdocvar{B} $\varepsilon$ \coqdocaxiom{ext} \coqdocvar{X}) \ensuremath{\land} (\coqdocvar{C} $\varepsilon$ \coqdocaxiom{ext} \coqdocvar{Y}))
\coqdoceol
\coqdocindent{8.90em}
 $\rightarrow$ (\coqdocvar{X} $\varepsilon$ \coqdocaxiom{ext} \coqdocvar{Y})).\coqdoceol
\coqdocemptyline
\end{small}
\end{coqdoccode} 
Definition 5: concentric balls.
\begin{coqdoccode}
\begin{small}
\coqdocemptyline
\coqdockw{Parameter} \coqdocaxiom{CON}      : \ensuremath{\forall} \coqdocvar{A} \coqdocvar{B}, \coqdocvar{A} $\varepsilon$ \coqdocaxiom{con} \coqdocvar{B} $\equiv\equiv$ ((\coqdocvar{A} $\varepsilon$ \coqdocvar{balls}) \ensuremath{\land} (\coqdocvar{B} $\varepsilon$ \coqdocvar{balls})) \ensuremath{\land} \coqdocaxiom{singular\_}
\coqdoceol
 \coqdocindent{8.90em}
\coqdocaxiom{equality} \coqdocvar{A} \coqdocvar{B} \ensuremath{\lor} 
(\coqdocvar{A} $<$ \coqdocvar{B} \ensuremath{\land} \ensuremath{\forall} \coqdocvar{X} \coqdocvar{Y}, ((\coqdocvar{X} $\varepsilon$ \coqdocvar{balls}) \ensuremath{\land} (\coqdocvar{Y} $\varepsilon$ \coqdocvar{balls}) \ensuremath{\land}
\coqdoceol
 \coqdocindent{8.90em}
(\coqdocvar{A} $\varepsilon$ \coqdocaxiom{edt} \coqdocvar{X} \coqdocvar{Y}) \ensuremath{\land} (\coqdocvar{X} $\varepsilon$ \coqdocaxiom{it} \coqdocvar{B}) \ensuremath{\land} (\coqdocvar{Y} $\varepsilon$ \coqdocaxiom{it} \coqdocvar{B})) \ensuremath{\rightarrow} (\coqdocvar{B} $\varepsilon$ \coqdocaxiom{idt} \coqdocvar{X} \coqdocvar{Y})) \ensuremath{\lor} \coqdoceol
\coqdocindent{8.90em}
(\coqdocvar{B} $<$ \coqdocvar{A} \ensuremath{\land} \ensuremath{\forall} \coqdocvar{X} \coqdocvar{Y}, (\coqdocvar{X} $\varepsilon$ \coqdocvar{balls}) \ensuremath{\land} (\coqdocvariable{Y} $\varepsilon$ \coqdocvar{balls}) \ensuremath{\land} (\coqdocvar{B} $\varepsilon$ \coqdocaxiom{edt} \coqdocvar{X} \coqdocvar{Y}) \ensuremath{\land}  \coqdoceol
\coqdocindent{8.90em}
 (\coqdocvar{X} $\varepsilon$ \coqdocaxiom{it} \coqdocvar{A}) \ensuremath{\land}  (\coqdocvar{Y} $\varepsilon$ \coqdocaxiom{it} \coqdocvar{A}) \ensuremath{\rightarrow} (\coqdocvar{A} $\varepsilon$ \coqdocaxiom{idt} \coqdocvar{X} \coqdocvar{Y}))).\coqdoceol
\coqdocemptyline
\end{small}
\end{coqdoccode} 
Definition 6: point.
\begin{coqdoccode}
\begin{small}
\coqdocemptyline
\coqdockw{Parameter} \coqdocaxiom{POINT}   :  \ensuremath{\forall} \coqdocvar{P} \coqdocvar{B}, \coqdocvar{P} $\varepsilon$ (\coqdocaxiom{point} \coqdocvar{B}) $\equiv\equiv$ ((\coqdocvar{P} $\varepsilon$ \coqdocvar{P}) \ensuremath{\land} (\coqdocvariable{B} $\varepsilon$ \coqdocvar{balls}) \ensuremath{\land} \coqdoceol
\coqdocindent{8.90em}
 \ensuremath{\forall} \coqdocvar{B'}, (\coqdocvar{B'} $\varepsilon$ \coqdocvar{balls})  \ensuremath{\land} \coqdocvar{B'} \coqdocaxiom{con} \coqdocvar{B}).\coqdoceol
\coqdocemptyline
\end{small}
\end{coqdoccode} 
Definition 7: equidistance.
\begin{coqdoccode}
\begin{small}
\coqdocemptyline
\coqdockw{Parameter} \coqdocaxiom{EQUID}    : \ensuremath{\forall} \coqdocvar{A} \coqdocvar{B} \coqdocvar{C}, \coqdocvar{A} $\varepsilon$ \coqdocaxiom{equid} \coqdocvar{B} \coqdocvar{C} $\equiv\equiv$ ((\coqdocvar{A} $\varepsilon$ \coqdocvar{balls}) \ensuremath{\land} (\coqdocvar{B} $\varepsilon$ \coqdocvar{balls}) \ensuremath{\land} \coqdoceol
\coqdocindent{8.90em} (\coqdocvar{C} $\varepsilon$ \coqdocvar{balls}) \ensuremath{\land} \ensuremath{\exists} \coqdocvar{X}, ((\coqdocvar{X} $\varepsilon$ \coqdocvar{balls}) \ensuremath{\land} (\coqdocvar{X} $\varepsilon$ \coqdocaxiom{con} \coqdocvar{A}) \ensuremath{\land} \ensuremath{\lnot} \ensuremath{\exists} \coqdocvar{Y}, \coqdoceol
\coqdocindent{8.90em}
 ((\coqdocvar{Y} $\varepsilon$ \coqdocvar{balls}) \ensuremath{\land} \coqdocvar{Y} $\varepsilon$ (\coqdocaxiom{union} \coqdocvar{B} \coqdocvar{C}) \ensuremath{\land} (\coqdocvar{Y} $\leqslant$ \coqdocvar{X}) \ensuremath{\lor} (\coqdocvar{Y} $\varepsilon$ \coqdocaxiom{ext} \coqdocvar{X})))).\coqdoceol
\coqdocemptyline
\end{small}
\end{coqdoccode} 
Definition 8: solids.\label{solid}
\begin{coqdoccode}
\begin{small}
\coqdocemptyline
\coqdockw{Parameter} \coqdocaxiom{TarskiD8}  : \ensuremath{\forall} \coqdocvar{A}, \coqdocvar{A} $\varepsilon$ \coqdocvar{solids} $\equiv\equiv$ \ensuremath{\exists} \coqdocvar{B}, (\coqdocvar{B} $\varepsilon$ \coqdocvar{B} \ensuremath{\land} (\coqdocvar{B} $\varepsilon$ \coqdocaxiom{coll} \coqdocvar{balls}) \ensuremath{\land}
(\coqdocvar{A} $\varepsilon$ \coqdocaxiom{subcoll} \coqdocvar{B})).  
\coqdocemptyline
\end{small}
\end{coqdoccode} 
Definition 9: interior point.
\begin{coqdoccode}
\begin{small}
\coqdocemptyline
\coqdockw{Parameter} \coqdocaxiom{IPOINT}  : \ensuremath{\forall} \coqdocvar{P} \coqdocvar{X} \coqdocvar{C}, \coqdocvar{P} $\varepsilon$ (\coqdocaxiom{ipoint} \coqdocvar{X}) $\equiv\equiv$ (\coqdocvar{X} $\varepsilon$ \coqdocvar{solids} \ensuremath{\land} \coqdocvar{P} $\varepsilon$ (\coqdocaxiom{point} \coqdocvar{C}) \ensuremath{\land} \ensuremath{\exists} \coqdocvar{A'}, ((\coqdocvar{A'} $\varepsilon$ \coqdocvar{balls}) \ensuremath{\land} \coqdoceol
\coqdocindent{8.50cm}
(\coqdocvar{A'} $\varepsilon$ \coqdocvar{P}) \ensuremath{\land} (\coqdocvar{A'} $\leqslant$ \coqdocvar{X}))).\coqdoceol
\coqdocemptyline
\end{small}
\end{coqdoccode} 
\subsection{Revisiting the Axiom System}
The axiom system of Tarski for the geometry of solid can be broadly divided in three parts, (i) axioms stating the existence of a correspondence between notions of the geometry of solids and notions of ordinary point geometry, (ii) two axioms establishing a correspondence between notions of the geometry of solids and topology and (iii) internal axioms that are derivable from Le{\'s}niewski's mereology. Axioms of the former part are:
\begin{ax}
The notions of point and equidistance of two points to a third satisfy all axioms of ordinary Euclidean geometry of three dimensions.
\end{ax} 
More specifically, this axiom states that (i) points as they are introduced in definition 6 correspond to points of an ordinary point-based geometry and (ii) the relation \begin{coqdoccode}\begin{small}\coqdocaxiom{EQUID}\end{small}
\end{coqdoccode} corresponds to an ordinary equidistance relation. With $\Pi$ standing for mereogeometrical points, the structure $\langle \Pi, \: \mathsf{EQUID}\rangle$ is a Pieri's structure \cite{Tarski29}. Then, it can be proved that $\langle \Pi, \: \mathsf{EQUID}\rangle$ is isomorphic to ordinary Euclidian geometry $\langle \mathbb{R}^3, \: \mathsf{EQUID}^{\mathbb{R}^3}\rangle$ (see \cite{Grusz08} p 500).
\begin{ax}
If $A$ is a solid, the class $\alpha$ of all interior points of $A$ is a non-empty regular open set.
\end{ax}
\begin{ax}
If the class $\alpha$ of points is a non-empty regular open set, there exists a solid $A$ such that $\alpha$ is the class of all its interior points.
\end{ax}
The second axiomatic part relies on the  structure $\langle \Pi, \: \mathsf{EQUID}\rangle$. It follows that we are able to define the family of open balls $\mathcal{O}b_\Pi$ in it and then introduce in $\Pi$ the family $\mathcal{O}_\Pi$ of open sets together with appropriate topological operations of closure and openness such that $\langle \Pi, \: \mathcal{O}_\Pi \rangle$ is a topological space. If in $\langle \Pi, \: \mathcal{O}_\Pi \rangle$, we introduce the family $\mathcal{O}r^0_\Pi$ of all regular open sets excluding the empty set, then it can be proved that $\langle \mathcal{O}r^0_\Pi, \: \mathcal{O}b_\Pi, \: \subseteq \rangle$ is isomorphic to $\langle \mathcal{O}r^0_{\mathbb{R}^3}, \: \mathcal{O}b_{\mathbb{R}^3}, \: \subseteq \rangle$ (see \cite{Grusz08} for more details).
\par The third axiomatic part of the geometry of solid is derivable from Le{\'s}niewski's mereology as follows. The first axiom of Tarski is stated as:
\begin{ax}\label{A41}
If $A$ is a ball and $B$ a part of $A$, there exists a ball $C$ which is a part of $B$.
\end{ax}
Using previous lemmas and definitions from Le{\'s}niewski's mereology, it can be proved in Coq as the theorem:
\begin{coqdoccode}
\begin{small}
\coqdocemptyline
\coqdocnoindent
\coqdockw{Theorem} \coqdocaxiom{TA4}         : \coqdockw{\ensuremath{\forall}} \coqdocvar{A} \coqdocvar{B}, (\coqdocvar{A} $\varepsilon$ \coqdocvar{balls} \ensuremath{\land} \coqdocvar{B} $\varepsilon$ \coqdocaxiom{el} \coqdocvar{A}) \ensuremath{\rightarrow} \ensuremath{\exists} \coqdocvar{C}, (\coqdocvar{C} $\varepsilon$ \coqdocvar{balls} $\rightarrow$ \coqdocvar{C} $\varepsilon$ \coqdocaxiom{el} \coqdocvar{B}). \coqdoceol
\coqdockw{Proof}.\coqdoceol
\coqdocindent{1cm}\coqdockw{intros} \coqdocvar{A} \coqdocvar{B} \coqdocvar{H1}.\coqdoceol
\coqdocindent{1cm}\coqdockw{destruct} \coqdocvar{H1} \coqdockw{as} \coqdockw{[}\coqdocvar{H1} \coqdocvar{H2}\coqdockw{]}.\coqdoceol
\coqdocindent{1cm}\coqdockw{assert} (\coqdocvar{H0}:=\coqdocvar{H1});\coqdockw{apply} \coqdocaxiom{OntoT5} \coqdockw{in} \coqdocvar{H0}.\coqdoceol
\coqdocindent{1cm}\coqdockw{apply} \coqdocaxiom{XIII} \coqdockw{in} \coqdocvar{H1}.\coqdoceol
\coqdocindent{1cm}\coqdockw{apply} \coqdocaxiom{MereoT44} \coqdockw{in} \coqdocvar{H2}.\coqdoceol
\coqdocindent{1cm}\coqdockw{assert} (\coqdocvar{H3}:(\coqdocvar{B} $\varepsilon$ \coqdocvar{solids})).\coqdoceol
\coqdocindent{1cm}\coqdockw{apply} \coqdocaxiom{TarskiD8};\coqdockw{exists} \coqdocvar{A}.\coqdoceol
\coqdocindent{1cm}\coqdockw{split};\coqdockw{[} \coqdockw{assumption} $|$ \coqdockw{split};\coqdockw{assumption}\coqdockw{]}.\coqdoceol
\coqdocindent{1cm}\coqdockw{clear} \coqdocvar{H0} \coqdocvar{H1} \coqdocvar{H2}.\coqdoceol
\coqdocindent{1cm}\coqdockw{assert} (\coqdocvar{H1}:=\coqdocvar{H3});\coqdockw{apply} \coqdocaxiom{OntoT5} \coqdockw{in} \coqdocvar{H3}.\coqdoceol
\coqdocindent{1cm}\coqdockw{apply} \coqdocaxiom{TarskiD8} \coqdockw{in} \coqdocvar{H1}.\coqdoceol
\coqdocindent{1cm}\coqdockw{destruct} \coqdocvar{H1} \coqdockw{as} \coqdockw{[}\coqdocvar{C} \coqdocvar{H1}\coqdockw{]}.\coqdoceol
\coqdocindent{1cm}\coqdockw{decompose} \coqdockw{[}\coqdockw{and}\coqdockw{]} \coqdocvar{H1};\coqdockw{clear} \coqdocvar{H1}.\coqdoceol
\coqdocindent{1cm}\coqdockw{apply} \coqdocaxiom{MD3} \coqdockw{in} \coqdocvar{H2};\coqdockw{apply} \coqdocaxiom{MereoT44} \coqdockw{in} \coqdocvar{H4}.\coqdoceol
\coqdocindent{1cm}\coqdockw{destruct} \coqdocvar{H2} \coqdockw{as} \coqdockw{[}\coqdocvar{H0} \coqdocvar{H2}\coqdockw{]};\coqdockw{clear} \coqdocvar{H0}.\coqdoceol 
\coqdocindent{1cm}\coqdockw{apply} \coqdocvar{H2} \coqdockw{in} \coqdocvar{H4};\coqdockw{clear} \coqdocvar{H2}.\coqdoceol
\coqdocindent{1cm}\coqdockw{destruct} \coqdocvar{H4} \coqdockw{as} \coqdockw{[}\coqdocvar{E} \coqdocvar{H4}\coqdockw{]};\coqdockw{destruct} \coqdocvar{H4} \coqdockw{as} \coqdockw{[}\coqdocvar{F} \coqdocvar{H4}\coqdockw{]}.\coqdoceol
\coqdocindent{1cm}\coqdockw{decompose} \coqdockw{[and]} \coqdocvar{H4};\coqdockw{clear} \coqdocvar{H4}.\coqdoceol
\coqdocindent{1cm}\coqdockw{exists} \coqdocvar{F};\coqdockw{intro};\coqdockw{assumption}.\coqdoceol
\coqdockw{Qed.}
\coqdoceol
\coqdocemptyline
\end{small}
\end{coqdoccode} 
The second axiom says that:
\begin{ax}\label{A42}
If $A$ is a solid and $B$ a part of $A$, then $B$ is also a solid.
\end{ax}
The proof in Coq is detailed below.
\begin{coqdoccode}
\begin{small}
\coqdocemptyline
\coqdocnoindent
\coqdockw{Theorem} \coqdocaxiom{TA4'}         : \coqdockw{\ensuremath{\forall}} \coqdocvar{A} \coqdocvar{B}, (\coqdocvar{A} $\varepsilon$ \coqdocvar{solids} \ensuremath{\land} \coqdocvar{B} $\varepsilon$ \coqdocaxiom{el} \coqdocvar{A})\ensuremath{\rightarrow} \coqdocvar{B} $\varepsilon$ \coqdocvar{solids}.\coqdoceol
\coqdockw{Proof}.\coqdoceol
\coqdocindent{1cm}\coqdockw{intros} \coqdocvar{A} \coqdocvar{B} \coqdocvar{H}.\coqdoceol
\coqdocindent{1cm}\coqdockw{destruct} \coqdocvar{H} \coqdockw{as} \coqdockw{[}\coqdocvar{H1} \coqdocvar{H2}\coqdockw{]}.\coqdoceol
\coqdocindent{1cm}\coqdockw{apply} \coqdocaxiom{TarskiD8} \coqdockw{in} \coqdocvar{H1}.\coqdoceol
\coqdocindent{1cm}\coqdockw{destruct} \coqdocvar{H1} \coqdockw{as} \coqdockw{[}\coqdocvar{C} \coqdocvar{H3}\coqdockw{]}.\coqdoceol
\coqdocindent{1cm}\coqdockw{destruct} \coqdocvar{H3} \coqdockw{as} \coqdockw{[}\coqdocvar{H3} \coqdocvar{H4}\coqdockw{]};\coqdockw{destruct} \coqdocvar{H4} \coqdockw{as} \coqdockw{[}\coqdocvar{H4} \coqdocvar{H5}\coqdockw{]}.\coqdoceol
\coqdocindent{1cm}\coqdockw{apply} \coqdocaxiom{MereoT44} \coqdockw{in} \coqdocvar{H2}.\coqdoceol
\coqdocindent{1cm}\coqdockw{apply} \coqdocaxiom{TarskiD8};\coqdockw{exists} \coqdocvar{C}.\coqdoceol
\coqdocindent{1cm}\coqdockw{split};\coqdockw{[} \coqdockw{assumption} $|$ \coqdockw{split};\coqdockw{[} \coqdockw{assumption} $|$ \coqdoceol
\coqdocindent{2cm}\coqdockw{apply} (\coqdocaxiom{Transitive\_subcoll} \coqdocvar{B} \coqdocvar{A} \coqdocvar{C});\coqdockw{split};\coqdockw{assumption} \coqdoceol
\coqdocindent{2cm}\coqdockw{]].}
\coqdockw{Qed.}
\coqdoceol
\coqdocemptyline
\end{small}
\end{coqdoccode}
Another axiom is provided by Tarski which relies on the definition of \textit{interior points} and is formulated as follows.
\begin{ax}\label{A43}
If $A$ and $B$ are solids, and all the interior points of $A$ are at the same time interior points of $B$, then $A$ is a part of $B$.
\end{ax}
As advocated by the author, axiom \ref{A43} (i) relies on the interplay between \textit{interior points} and the set-based definition of regular open sets, and thus requires to state the relation between a solid and its interior points and (ii) is merely an alternative of any axiom among \ref{A41} and \ref{A42}. It is well-known that the more axioms one assumes in a formal system, the harder it becomes to preserves its soundness.
If we use one of the axioms \ref{A41} or \ref{A42} we get a more stronger system than the one obtained with axiom \ref{A43} (see theorem 1.3 in \cite{Grusz08}). It follows that the resulting axiom system obtained by blurring axiom \ref{A43} provides a minimal system that can serve as a basis for constructing spatial theories.
\section{Conclusion}\label{conclusion}
Generalizing solids to spatial regions, geometrical theories based on mereology present an appealing impact on spatial theories. As underlined in \cite{Borgo10}, they provide formal theories adequate for different tasks. Among their benefits, (i) they make possible a direct mapping from empirical entities and laws to theoretical entities and formulas, (ii) they have the ability to formalize human learning, conceptualization, and categorization of spatial entities and relations and (iii) they have received a particular emphasis in the field of formal ontology with mereogeometrical notions. The theory of Tarski, has been proved to be semantically complete with regards to the models expressed in terms of $R^n$ and has been axiomatized by Bennett \cite{Benn01}. 
\par Major problems are that (i) the set-based interpretation (e.g., interpreting \begin{coqdoccode}\begin{small}\coqdocvar{A} $\varepsilon$ \coqdocaxiom{pt} \coqdocvar{B}\end{small}\end{coqdoccode} as $A \subseteq B$) considerably weakens the logical power of Le{\'s}niewski's framework and (ii) the approach described in \cite{Grusz08} in which the fourth axiom is replaced with a new postulate asserting that the domain of discourse of Tarski's theory coincides with arbitrary mereological sums of balls, does not simplify the work of Tarski either. What we have proposed so far to avoid these problems, is a logical foundation having the following properties: (i) the proposed set of structures featuring geometrical entities and relations relies on Tarski's mereogeometry, (ii) it has a model in ordinary three-dimensional Euclidian geometry \cite{Tarski56a}, (iii) it is based on three axioms instead of four (iv) it is coherent with Le{\'s}niewski's mereology and does not suffer the defects cited in \cite{Betti12} and (v) it will serve as a basis for spatial reasoning with full compliance with Le{\'s}niewski's systems. These systems have precisely the property to be scalable, which is a significant argument for extending the theory with new definitions in appropriate applications. Future work will develop this last aspect.

\end{document}